\theoremstyle{plain}
\newtheorem{theorem}{Theorem}
\newtheorem{lemma}{Lemma}
\newtheorem{prop}{Proposition}
\newtheorem{corollary}{Corollary}
\newtheorem{fact}{Fact}
\theoremstyle{definition}
\newtheorem{definition}{Definition}
\newcommand{\bfv}[1]{\mathbf{#1} } 
\begin{document}

\title{On Restricting No-Junta Boolean Function and Degree Lower Bounds by Polynomial Method}

\author{
Chia-Jung Lee\thanks{Department of Computer Science, National Chiao Tung University, Hsinchu, Taiwan. }
\and
Satya V. Lokam\thanks{Microsoft Research India, Bangalore, India. }
\and
Shi-Chun Tsai \footnotemark[1] 
\and 
Ming-Chuan Yang \footnotemark[1]
}

\maketitle

\begin{abstract}

Let $\mathcal{F}_{n}^*$ be the set of Boolean functions depending on all $n$ variables.
We prove that for any $f\in \mathcal{F}_{n}^*$,
$f|_{x_i=0}$ or $f|_{x_i=1}$ depends on the remaining $n-1$ variables, for some variable $x_i$.
This existent result suggests a possible way to deal with general Boolean functions via its subfunctions of some restrictions. 

As an application, we consider the degree lower bound of representing polynomials over finite rings. Let $f\in \mathcal{F}_{n}^*$ and 
denote the exact representing degree over the ring  $\mathbb{Z}_m$ (with the integer $m>2$) as $d_m(f)$.  
Let $m=\Pi_{i=1}^{r}p_i^{e_i}$, where $p_i$'s are distinct primes, and $r$ and $e_i$'s are positive integers. If $f$ is symmetric, then $m\cdot d_{p_1^{e_1}}(f)\cdots d_{p_r^{e_r}}(f) > n$. If $f$ is non-symmetric, by the second moment method we prove almost always
$m\cdot d_{p_1^{e_1}}(f)\cdots d_{p_r^{e_r}}(f) > \lg{n}-1$. In particular, as $m=pq$ where $p$ and $q$ are arbitrary distinct primes, we have $d_p(f)d_q(f)=\Omega(n)$ for symmetric $f$ and $d_p(f)d_q(f)=\Omega(\lg{n}-1)$ almost always for non-symmetric $f$. Hence any $n$-variate symmetric Boolean function can have exact representing degree $o(\sqrt{n})$ in at most one finite field, and for non-symmetric functions, with $o(\sqrt{\lg{n}})$-degree in at most one finite field.

\end{abstract}


\section{Introduction}
The random restriction method and the polynomial method are powerful tools in computational complexity \cite{Jun12}; the former is applied to make a function become easier subfunctions (such as the classical switching lemma \cite{Has86}) and the polynomials are good computation or approximation models of Boolean functions \cite{Am06,BL14,Raz87}.
A subfunction of a given Boolean function is obtained by fixing some variables $0/1$ assignments (a.k.a. a restriction). 
In is work we propose an idea of dealing with general Boolean functions by finding a well-studied while non-trivial substructure (for instance, symmetric or monotone subfunctions). To do this, we need the subfunctions maintain some property after fixing some variables.   

In the literature (such as \cite{Has86,Jun12,Tal14}), arguments via  random or adaptive restrictions on a Boolean function are targeted to prove a very low probability of sustaining hard subfunctions after some restrictions. 
That is, a Boolean function is very likely to degenerate after restrictions. 
For example, if any variable is assigned as $1$ then the $\mathsf{OR}_n$ function becomes a constant. It is not clear that for arbitrary Boolean function depending on all variables, does there exist a subfunction that depends on the rest variables (the so-called no-junta subfunction)? This is important as most of complexity measures are conditioned on the number of influential variables. We give a positive answer in this work.

As an application of the existence of no-junta subfunction, we consider the degree lower bound of general Boolean functions. The degree of a polynomial exactly representing a boolean function relates to many classical complexity measures \cite{Jun12}, including 
the decision tree complexity, the circuit complexity \cite{Raz87} and the quantum complexity \cite{Am06}, etc. It is easy to acquire the representing degree for a specific function since the polynomial can be constructed by the interpolation based on the truth table. However, this cannot give a nontrivial degree lower bound for all functions. Nisan and Szegedy \cite{NS94} first proved a lower bound for real value representing polynomials. When the polynomial coefficients are in finite fields, a breakthrough was made by Gopalan et al. \cite{GLS10}. 
In this work we give lower bounds with simpler forms and different proofs. 


The rest of this paper is organized as follows. In Section 2, we introduce some notations and definitions. In Section 3 we prove any $n$-variate function has a subfunction depending on the rest of $n-1$ variables. Section 4 shows the degree bounds. Section 5 concludes the paper.

\section{Preliminaries}
Define $\mathcal{F}_n=\left\{ f:\{0,1\}^{n} \to \{0,1\} \right\}$, which is the collection of all
$n$-variable Boolean functions. 
Let $[n]=\{1,\cdots,n\}$, $\bfv{x}=(x_1,\cdots,x_n)\in \{0,1\}^n$. Note that $\mathcal{F}_n$ contains juntas. A $k$-junta is a function $f(\bfv{x})$ represented as a formula of $n$ variables while in fact there exists a proper subset $K\subset [n]$ of size $k<n$ that decides the value of $f$, and all variables in $[n]\setminus K$ have no influence on the output of $f$. Denote $\mathcal{F}_{n}^*$ as the subset of $\mathcal{F}_n$ that excludes all juntas.
Hence,  if $f\in \mathcal{F}_{n}^*$, then $f$ depends on all $n$ variables. 

Consider an index subset $I\subseteq[n]$ and a partial assignment $\rho_{I}$ on the variables indexed by $I$. For a given $\rho_I$, we call a variable in $I$ the assigned variable and a variable in $I^c$ the free variable. 
We define the subfunction $f(\bfv{x})|_{\rho_I}$ to be the function derived by restricting the variables according to $\rho_I$ and keeping the variables in $I^c$ free. Take the address function \cite{Jun12} 
as an example: $\mathsf{Address}(x_1,x_2,x_3,x_4,x_5,x_6)$ outputs the value of $x_{z}$ where $z={x_1}\cdot 2^0+{x_2}\cdot 2^1+3$. Let $\rho_{\{1,6\}}$ be a restriction such that $x_1=1, x_6=0$ then $\mathsf{Address}(x_1,x_2,x_3,x_4,x_5,x_6)|_{\rho_{ \{1,6\} }}=x_4-x_2x_4$.
Note this subfunction is a 2-junta not depending on the free variables $x_3$ and $x_5$.
As mentioned, for an arbitrary $n$-variate function the existence of its no-junta subfunction was not clear. 

Let $\mathbb{Z}_m$ be the ring of $\{0,\cdots,m-1\}$, where the integer $m\geq 2$.
Let $\mathbb{Z}_m[\bfv{x}]$ be the set of all polynomials over $\mathbb{Z}_m$, and similarly for $\mathbb{R}[\bfv{x}]$ and $\mathbb{Z}[\bfv{x}]$.
For $f \in \mathcal{F}_n$, let $\left\{ f(D):D\subseteq [n] \right\}$ be its truth table. 
Then $f$ can be represented by $F(\bfv{x})=\sum_{D\subset [n]}{f(D)\Pi_{i\in D}x_i\Pi_{i\not\in D}(1-x_i)}\in \mathbb{Z}[\bfv{x}]$ (also in $\mathbb{R}[\bfv{x}]$). This implies the existence of the exact representation and the uniqueness is easy to prove \cite{GLS10,Jun12}. 
Furthermore, since all coefficients in the expansion of $F(\bfv{x})$ are integers and $x_i$'s are in $\{0,1\}$, $F(\bfv{x})$ is a  multilinear polynomial. Also observe that $F(\bfv{x})\pmod{m}\in\mathbb{Z}_m[\bfv{x}]$. Denote $F(\bfv{x})\pmod{m}$ by $P(\bfv{x})$. Obviously we have $\deg(F(\bfv{x}))\geq \deg(P(\bfv{x}))$. That is, for a given $f$, $d_{\mathbb{Z}}(f)\geq d_{m}(f)$. 
For $f\in \mathcal{F}_{n}^*$,
the Nisan-Szegedy bound \cite{NS94} states that $d_{\mathbb{R}}(f)\geq \log{n}-O(\log{\log{n}})$.
Observe that $\mathsf{PARITY}(\bfv{x})$ over $\mathbb{Z}_2$ can be computed by $\sum_{i=1}^nx_i\pmod{2}$, i.e. $d_{2}(\mathsf{PARITY})=1$, which shows the difference between $d_{\mathbb{R}}(f)$ and $d_{m}(f)$.
Also note that Gopalan et al. \cite{GLS10} proved $ {\lceil \lg p\rceil}\cdot p^{2 d_{p}(f)} \cdot d_{p}(f)\cdot d_{q}(f) \geq n$.

Suppose the total degree of $P(\bfv{x})$ is $d$, and write $P(x_1,\cdots,x_n)=\sum_{D\subseteq [n];|D|\leq d}{c_D\Pi_{j\in D}x_j}\pmod{m}$, where $c_D\in \mathbb{Z}_m$ for all $D\subseteq [n]$. 
Sometimes we treat $D\subseteq [n]$ as a variable, such as $f(D)$ or $P(D)$; by this we mean $f(D)=f(x_1,\cdots,x_n)$ where $x_i=1\Leftrightarrow i\in D$. 
Denote $|\bfv{x}|=\sum_{i=1}^{n}x_i$ while $|D|$ is the cardinality of $D\subseteq [n]$. If $f$ is symmetric then the output value of $f$ is decided by $|\bfv{x}|$. 
The symmetry makes all monomials of the same degree have the same coefficient, that is, for all $D$ and $D'$ with $|D|=|D'|=k$, we have $c_D=c_{D'}=c_k$. It is easy to prove 
\begin{fact} \label{sympoly}
If $P$ is a symmetric polynomial of degree $d$, then for any $A\subseteq [n]$,
$$
P(A)=\sum_{k=0}^{d}{c_k}\binom{|A|}{k}
$$
\end{fact}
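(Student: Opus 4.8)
The plan is to reduce $P$ to a $\mathbb{Z}_m$-linear combination of elementary symmetric polynomials and then evaluate at the indicator vector of $A$. First I would use the coefficient symmetry already noted above: since $f$ is symmetric, uniqueness of the multilinear representation (over $\mathbb{Z}$, hence over $\mathbb{Z}_m$ after reduction) forces $c_D = c_{D'}$ whenever $|D| = |D'|$, so we may write $c_D = c_k$ for all $D$ with $|D| = k$. Grouping the monomials of $P$ by degree then gives
$$
P(\bfv{x}) = \sum_{k=0}^{d} c_k \sum_{\substack{D\subseteq[n]\\ |D|=k}} \prod_{j\in D} x_j \pmod{m},
$$
that is, $P = \sum_{k=0}^{d} c_k\, e_k$ where $e_k$ denotes the $k$-th elementary symmetric polynomial in $x_1,\dots,x_n$.

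Next I would substitute $\bfv{x} = \mathbf{1}_A$, the $0/1$ vector with $x_i = 1 \iff i \in A$. For a squarefree monomial $\prod_{j\in D} x_j$, its value at $\mathbf{1}_A$ is $1$ if $D \subseteq A$ and $0$ otherwise. Therefore $e_k(\mathbf{1}_A)$ counts exactly the $k$-element subsets $D$ of $A$, which is $\binom{|A|}{k}$ (and this is $0$ when $k > |A|$, consistent with the usual binomial convention, so the sum can be written with upper limit $d$ regardless of $|A|$). Summing over $k$ yields $P(A) = \sum_{k=0}^{d} c_k \binom{|A|}{k} \pmod{m}$, which is the asserted identity.

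I do not anticipate a genuine obstacle: the argument is a one-line counting fact once the coefficients are known to depend only on the degree. The only points deserving a word of care are (i) the identity lives in $\mathbb{Z}_m$, so $\binom{|A|}{k}$ is read modulo $m$, and (ii) the step $c_D = c_k$ rests on uniqueness of the mod-$m$ multilinear representation together with the transitivity of the symmetric group on $k$-subsets of $[n]$ — both of which have effectively been recorded in the preceding discussion.
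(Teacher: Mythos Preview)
Your argument is correct and is exactly the natural proof. The paper itself does not supply a proof of this fact---it merely says ``It is easy to prove'' immediately before stating it---so there is nothing to compare against; your write-up simply fills in the omitted details using the coefficient identity $c_D=c_k$ that the paper records in the sentence preceding the fact.
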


The following property is helpful to expose the relation between the binomial coefficients and the degree of a polynomial.

\begin{lemma}(See \cite{Tsai96}.)\label{cycle}
Let $m=\Pi_{i=1}^{r}{p_i^{e_i}}$ be a positive integer, where $p_i$'s are distinct primes and $e_i$'s are positive integers.
Let $L_k=\Pi_{i=1}^{r}{p_i^{e_i+\lfloor \log_{p_i}{k} \rfloor}}$. Then for any given nonnegative integers $s$, $j$ and $k$, we have 
$$
\binom{sL_k+j}{k}\equiv \binom{j}{k}\pmod{m}.
$$

\end{lemma}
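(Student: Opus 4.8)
The plan is to establish the congruence modulo each prime power $p_i^{e_i}$ separately and then recombine via the Chinese Remainder Theorem, which is legitimate because the factors $p_i^{e_i}$ are pairwise coprime. So fix an index $i$, write $p=p_i$, $e=e_i$ and $\ell=\lfloor\log_{p}k\rfloor$; we may assume $k\geq 1$ since for $k=0$ both sides equal $1$. The first observation is that $L_k$ is divisible by $p^{e+\ell}$, as that is exactly the $i$-th factor in the product defining $L_k$, so we can write $sL_k=s'p^{e+\ell}$ for a nonnegative integer $s'$, and it suffices to prove
\[
\binom{s'p^{e+\ell}+j}{k}\equiv\binom{j}{k}\pmod{p^{e}}.
\]

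For this I would invoke the Vandermonde convolution $\binom{a+b}{k}=\sum_{t=0}^{k}\binom{a}{t}\binom{b}{k-t}$ with $a=s'p^{e+\ell}$ and $b=j$. The $t=0$ term is precisely $\binom{j}{k}$, so the whole task reduces to showing $\binom{s'p^{e+\ell}}{t}\equiv 0\pmod{p^{e}}$ for every $1\leq t\leq k$. To see this, use the identity $\binom{N}{t}=\frac{N}{t}\binom{N-1}{t-1}$, from which $v_p\!\left(\binom{N}{t}\right)\geq v_p(N)-v_p(t)$ because $\binom{N-1}{t-1}$ is an integer. Taking $N=s'p^{e+\ell}$ gives $v_p(N)\geq e+\ell$, while on the other hand $v_p(t)\leq\ell$ whenever $1\leq t\leq k$: if instead $v_p(t)\geq\ell+1$ then $t\geq p^{\ell+1}>k$, contradicting $t\leq k$. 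Therefore $v_p\!\left(\binom{s'p^{e+\ell}}{t}\right)\geq e$ (and if $s'p^{e+\ell}<t$ the coefficient is $0$ and the bound is trivial), as needed.

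Putting the two steps together, modulo $p^{e}$ the Vandermonde sum collapses to $\binom{s'p^{e+\ell}+j}{k}\equiv\binom{j}{k}$; doing this for each $i$ and recombining by the Chinese Remainder Theorem yields $\binom{sL_k+j}{k}\equiv\binom{j}{k}\pmod{m}$.

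The delicate point, and the reason the exponent $e+\lfloor\log_{p}k\rfloor$ appears in $L_k$, is the inequality $v_p(t)\leq\lfloor\log_{p}k\rfloor$ for all $t$ in the range $1\leq t\leq k$; a strictly smaller power of $p$ in $L_k$ would break the valuation estimate (already for $p=2$, $e=1$, $k=3$ one checks that period $2$ fails whereas period $4$ works). Beyond pinning down this exponent, the argument is a routine combination of the Vandermonde identity and elementary $p$-adic (Legendre/Kummer-style) valuation bounds.
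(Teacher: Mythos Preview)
Your argument is correct. The paper itself does not supply a proof of this lemma; it merely cites \cite{Tsai96}, so there is nothing in the paper to compare your route against beyond noting that you have filled in what the authors left as a reference. Your reduction to one prime power at a time via the Chinese Remainder Theorem, followed by Vandermonde convolution and the $p$-adic valuation bound $v_p\!\left(\binom{N}{t}\right)\ge v_p(N)-v_p(t)$, is exactly the kind of direct and elementary proof one would expect for this statement, and every step checks out. The closing remark about why the exponent $e+\lfloor\log_p k\rfloor$ is tight is a nice touch but not needed for the lemma as stated.
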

For a given symmetric polynomial of degree $d$, we will consider $k=0,1,\cdots,d$. 
For saving space, we  abbreviate $a\equiv b \pmod{m}$ as $a\equiv_{m} b$.
Observe that if $k\leq d$ then $L_k|L_d$. This means $\forall k\leq d$, $\binom{sL_d+j}{k}\equiv_{m} \binom{j}{k}$.
Besides, note that $\binom{j}{k}=0$ if $j<k$. 

\section{Existence of No-Junta Subfunction}
Let 
$f\in \mathcal{F}_n$
be a Boolean function on $n$ variables $x_1, \cdots, x_n$. We say $f$ depends
on the $i$-th variable $x_i$ if there is some input $a$ such that $f(a)\neq f(a^{(i)})$, where $a^{(i)}$
is obtained from $a$ by flipping the value of its $i$-th coordinate. The function $f$ is called \textit{nondegenerate} if $f$ depends on all its variables(, i.e. $f\in \mathcal{F}_n^{*}$).

The variable $x_i$ is said to be \textit{useful} for $f$ if $f$ depends on $x_i$; otherwise $x_i$ is \textit{useless}  for $f$. We will denote the set of (indices of) useless variables for f by $U(f) := \{i : f \mbox{ does not depend on } x_i \}$.
Note that the notion of a useless variable presupposes (often implicitly) a universe of variables
on which f is defined.

For $i\in [n]$ and $b\in\{0, 1\}$, 
the restriction $ f|_{x_i =b} : \{0, 1\}^{[n]\setminus\{i\}}\to\{0, 1\}$
is defined as the subfunction on variables $x_j$, $j\neq i$, obtained by fixing $x_i = b$. For notational convenience, we
denote $f|_{x_i =b}$ by $f_{ib}$.

Our main theorem is

\begin{theorem} \label{no_junta}
If $f\in \mathcal{F}_n^{*}$ then there exists an $i\in [n]$ such that at least one of $f_{i0}$ or $f_{i1}$ is nondegenerate, i.e., it depends on all the variables $[n]\setminus \{i\}$.
\end{theorem}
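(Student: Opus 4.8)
The plan is to work with the unique multilinear representation of $f$ over the field $\mathbb{F}_2=\mathbb{Z}_2$: write $f(\bfv{x})=\sum_{S\subseteq[n]}a_S\prod_{i\in S}x_i$ with $a_S\in\mathbb{F}_2$, and record its monomial support $\mathcal{M}=\{S:a_S=1\}$ together with $\mathcal{M}_i=\{S\in\mathcal{M}:i\in S\}$. Restrictions are transparent in this language: $f_{i0}$ has support $\mathcal{M}\setminus\mathcal{M}_i$, while the coefficient of a monomial $T\subseteq[n]\setminus\{i\}$ in $f_{i1}$ equals $a_T+a_{T\cup\{i\}}$ in $\mathbb{F}_2$. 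Since a Boolean function depends on $x_j$ precisely when some monomial of its $\mathbb{F}_2$-representation contains $x_j$, this gives the dictionary I need: $f\in\mathcal{F}_n^*$ says every $\mathcal{M}_j$ is nonempty; $f_{i0}$ is degenerate iff $\mathcal{M}_j\subseteq\mathcal{M}_i$ for some $j\neq i$; and $f_{i1}$ is degenerate iff for some $j\neq i$ one has $a_T=a_{T\cup\{i\}}$ for every $T$ with $j\in T$ and $i\notin T$. I will also use the elementary observation that for $f\in\mathcal{F}_n^*$ and every $i$ the sets $U(f_{i0})$ and $U(f_{i1})$ are disjoint: if some $x_j$ were useless for both restrictions, then $f$ itself would not depend on $x_j$.

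Granting that, I would argue by contradiction, assuming every $f_{ib}$ degenerate (the cases $n\le 2$ being immediate). First I would pick an index, relabelled $1$, for which $\mathcal{M}_1$ is minimal under inclusion among $\mathcal{M}_1,\dots,\mathcal{M}_n$. Degeneracy of $f_{10}$ supplies $j\neq 1$ with $\mathcal{M}_j\subseteq\mathcal{M}_1$, hence $\mathcal{M}_j=\mathcal{M}_1$ by minimality; relabel $j$ as $2$, so $x_1$ and $x_2$ co-occur, meaning every monomial of $\mathcal{M}$ contains both or neither. Since $\mathcal{M}_2\subseteq\mathcal{M}_1$ we have $2\in U(f_{10})$, so by the disjointness observation the variable witnessing degeneracy of $f_{11}$ is some $k\notin\{1,2\}$, relabelled $3$; thus $a_T=a_{T\cup\{1\}}$ whenever $3\in T$ and $1\notin T$. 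Next I would show $\mathcal{M}_3\subseteq\mathcal{M}_2$: for $A\in\mathcal{M}$ with $3\in A$, either $1\in A$, and then $2\in A$ by co-occurrence, or $1\notin A$, and then the identity at $T=A$ forces $A\cup\{1\}\in\mathcal{M}$, so $2\in A\cup\{1\}$ and hence $2\in A$. Minimality again gives $\mathcal{M}_3=\mathcal{M}_1$, so $x_1$ and $x_3$ also co-occur. Finally, taking any $A\in\mathcal{M}_1$ (nonempty because $f$ depends on $x_1$) we get $3\in A$, and the identity at $T=A\setminus\{1\}$ forces $a_{A\setminus\{1\}}=a_A=1$; but $A\setminus\{1\}$ contains $x_3$ and not $x_1$, contradicting the co-occurrence of $x_1$ and $x_3$.

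The step I expect to be the main obstacle is turning ``$f_{i1}$ is degenerate'' into something usable: unlike the $f_{i0}$ side, degeneracy of $f_{i1}$ is not a containment among the $\mathcal{M}_i$ but a cancellation condition $a_T=a_{T\cup\{i\}}$, so one cannot reason with monomial supports alone. The resolution is to first use minimality of $\mathcal{M}_1$ to force rigid co-occurrence structure ($\mathcal{M}_1=\mathcal{M}_2$, then $\mathcal{M}_1=\mathcal{M}_3$), and only then play the lone cancellation identity against that structure to manufacture a monomial that simultaneously must and cannot lie in $\mathcal{M}$. A smaller but essential point is the bookkeeping that the witnessing indices $2$ and $3$ are distinct from each other and from $1$ — which is exactly where disjointness of $U(f_{i0})$ and $U(f_{i1})$ enters.
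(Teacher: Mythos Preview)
Your proof is correct and complete; I cannot find a gap. The dictionary you set up is accurate: over $\mathbb{F}_2$ the unique multilinear expansion has the property that $f$ depends on $x_j$ exactly when $\mathcal{M}_j\neq\emptyset$, the support of $f_{i0}$ is $\mathcal{M}\setminus\mathcal{M}_i$, and the coefficient of $T$ in $f_{i1}$ is $a_T+a_{T\cup\{i\}}$. The minimality-of-$\mathcal{M}_1$ trick is clean, and the final contradiction (producing $A\setminus\{1\}\in\mathcal{M}_3=\mathcal{M}_1$ while $1\notin A\setminus\{1\}$) is airtight. The bookkeeping that $3\neq 2$ via $U(f_{10})\cap U(f_{11})=\emptyset$ is exactly right, and the small cases $n\le 2$ are indeed immediate.

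Your route is genuinely different from the paper's. The paper argues combinatorially: it builds a directed graph $G_f$ on $[n]$ with an edge $i\xrightarrow{b}j$ whenever $j\in U(f_{ib})$, establishes a transitivity property along paths, a ``two cycles through a vertex with opposite first labels'' obstruction, and a closure property $U1(f,U(f_{ib}))\subseteq U(f_{ib})\cup\{i\}$, and then proves by induction on $|S|$ that any $S$ with $U1(f,S)\subseteq S$ must contain a useless variable. Your argument replaces all of this machinery by one algebraic observation (the $\mathbb{F}_2$ expansion makes dependence a support condition) and one extremal choice (an inclusion-minimal $\mathcal{M}_i$), yielding a short non-inductive contradiction. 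What you gain is brevity and transparency; what the paper's approach offers in exchange is a reusable structural picture of the ``uselessness digraph'' $G_f$ that does not rely on any particular coefficient ring.
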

 
We start with an obvious observation:
\begin{prop} \label{juntabasic}
(i) If $x_i$ is useless for $f$, it is useless for any restriction of $f$. In particular,
$i\in U(f)$, then for all $j\in [n]\setminus\{i\}$, $i\in  U(f_{j0})\cap U(f_{j1})$.\\
(ii) On the other hand, if $i$ is useless for both $f_{j0}$ and $f_{j1}$ for some $i$ and $j$, then $i$ is useless
for $f$ as well. In notation, $\exists j$ such that $i\in U(f_{j0})\cap U(f_{j1})\Rightarrow i\in U(f)$.
\end{prop}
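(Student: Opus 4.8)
The plan is to prove both parts directly from the definition of dependence, with no induction or clever trick: the entire content is a careful correspondence between inputs of $f$ and inputs of its restrictions. Recall that $i\in U(f)$ means exactly $f(a)=f(a^{(i)})$ for every $a$ in the relevant cube, where $a^{(i)}$ flips the $i$-th coordinate. Throughout part (ii) I will take $i\neq j$; this is forced, since if $i=j$ then $x_i$ is not even among the variables of $f_{j0}$ or $f_{j1}$, so the hypothesis becomes vacuous while the conclusion can fail (e.g.\ $\mathsf{PARITY}$).

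For part (i), let $\rho_I$ be any restriction and suppose $i\in U(f)$. If $i\in I$ the claim is immediate, since $x_i$ is an assigned variable and hence not one of the variables of $f|_{\rho_I}$. If $i\notin I$, take any input $b$ to $f|_{\rho_I}$, i.e.\ an assignment to the free variables $I^c$, and let $a\in\{0,1\}^n$ be the extension of $b$ that agrees with $\rho_I$ on $I$. Then $a^{(i)}$ is the analogous extension of $b^{(i)}$, because $i\notin I$, so flipping coordinate $i$ does not touch $I$. By definition of restriction, $f|_{\rho_I}(b)=f(a)$ and $f|_{\rho_I}(b^{(i)})=f(a^{(i)})$, and the two right-hand sides are equal since $i\in U(f)$. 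Hence $f|_{\rho_I}(b)=f|_{\rho_I}(b^{(i)})$ for all $b$, i.e.\ $i\in U(f|_{\rho_I})$. The ``in particular'' statement is the special case $I=\{j\}$ with $j\neq i$, so that $i\notin I$.

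For part (ii), suppose $i\in U(f_{j0})\cap U(f_{j1})$ with $i\neq j$. Fix an arbitrary $a\in\{0,1\}^n$ and let $\beta=a_j$. Write $a'$ for $a$ restricted to the coordinates $[n]\setminus\{j\}$; then $a'$ is an input to $f_{j\beta}$ and $f(a)=f_{j\beta}(a')$. Since $i\neq j$, flipping coordinate $i$ commutes with deleting coordinate $j$: the restriction of $a^{(i)}$ to $[n]\setminus\{j\}$ equals $(a')^{(i)}$, and $a^{(i)}$ still carries $\beta$ in its $j$-th coordinate, so $f(a^{(i)})=f_{j\beta}((a')^{(i)})$. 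Because $i\in U(f_{j\beta})$ we get $f_{j\beta}(a')=f_{j\beta}((a')^{(i)})$, whence $f(a)=f(a^{(i)})$. As $a$ was arbitrary, $i\in U(f)$.

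The only point requiring any care is the bookkeeping of which cube each input lives in, together with the observation that flipping the $i$-th coordinate and fixing (or deleting) the $j$-th coordinate commute precisely when $i\neq j$ — which is also exactly why the $i=j$ degenerate case must be excluded at the outset. Everything else is a routine unwinding of the definitions.
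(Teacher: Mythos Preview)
Your proof is correct. The paper's own argument is a one-liner: it simply observes the decomposition $f=(1-x_j)\cdot f_{j0}+x_j\cdot f_{j1}$ (for $i\neq j$) and declares both parts ``clear'' from it. Your route is instead a direct unwinding of the definition of dependence via witness inputs $a$ and $a^{(i)}$, tracking carefully which cube each input lives in and using that flipping coordinate $i$ commutes with restricting coordinate $j$ when $i\neq j$. The two arguments are equivalent in content --- the algebraic identity is just another way of encoding the same commutation --- but yours is more explicit and self-contained, and in particular you spell out why the $i=j$ case must be excluded, which the paper leaves implicit.
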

\begin{proof}
For $i\neq j\in [n]$, the conclusions are clear by observing $f=(1-x_j)\cdot f_{j0}+x_j\cdot f_{j1}$. 
\end{proof}

Given a boolean function $f\in \mathcal{F}_n$, we construct a digraph $G_f=(V,E)$ with $V=[n]$ and a directed edge ${i}\overset{b}{\longrightarrow} j$ labeled by $b$ if and only if $j\in U(f_{ib})$, i.e. 
 $E=\{i\overset{b}{\longrightarrow} j: i\in [n], j\in U(f_{ib}),b=0\mbox{ or }1\}$. The digraph $G_f$ has the following properties:

\begin{prop} \label{trans_cycle}
(\textbf{Transitivity}:) 
Suppose in $G_f$ we have a path $i\overset{b_i}{\longrightarrow} j \overset{b_j}{\longrightarrow}k$ for distinct $i,j,k\in [n]$. Then the edge $i\overset{b_i}{\longrightarrow} k$ is also in $G_f$, i.e., $k\in U(f_{ib_i})$. More generally, if a path of distinct variables $i\overset{b_i}{\longrightarrow}j_1\overset{b_{j_1}}{\longrightarrow}\cdots \overset{b_{j_{t-1}}}{\longrightarrow} j_t$ exists in $G_f$, then $\{{j_1},\cdots,{j_t}\}\subseteq U(f|_{x_i=b_0})$.\\ 
(\textbf{Cycles}:) 
If there are two cycles through $i$ containing edges $i\overset{0}{\longrightarrow} j$ and $i\overset{1}{\longrightarrow} k$, all three variables $x_i, x_j, x_k$ are useless for $f$.
\end{prop}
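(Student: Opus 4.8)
I would treat the two parts separately, the cycle part resting on transitivity; throughout I use only that restrictions of distinct coordinates commute and that ``uselessness'' is inherited by further restrictions (Proposition~\ref{juntabasic}(i)). For the three-vertex path $i\overset{b_i}{\longrightarrow}j\overset{b_j}{\longrightarrow}k$ with $i,j,k$ distinct: the first edge means $f_{ib_i}$ does not depend on $x_j$, so $f_{ib_i}$ is recovered from $(f_{ib_i})|_{x_j=b_j}$, which equals $(f_{jb_j})|_{x_i=b_i}$ by commutativity; the second edge means $f_{jb_j}$ does not depend on $x_k$, hence by Proposition~\ref{juntabasic}(i) neither does $(f_{jb_j})|_{x_i=b_i}$, hence neither does $f_{ib_i}$, i.e. $k\in U(f_{ib_i})$. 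The general path statement then follows by induction on the length $t$: the inductive hypothesis puts $j_{t-1}\in U(f_{ib_i})$, so the edge $i\overset{b_i}{\longrightarrow}j_{t-1}$ is present, and applying the three-vertex case to $i\overset{b_i}{\longrightarrow}j_{t-1}\overset{b_{j_{t-1}}}{\longrightarrow}j_t$ gives $j_t\in U(f_{ib_i})$.

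For the cycle part, I would first use transitivity to collapse each of the two cycles to a cycle of length two through $i$: a simple cycle through $i$ using $i\overset{0}{\longrightarrow}j$ contains a directed path of distinct vertices from $j$ back to $i$, and transitivity along it produces an edge $j\overset{c}{\longrightarrow}i$ for some $c\in\{0,1\}$; likewise the other cycle produces $k\overset{d}{\longrightarrow}i$. Hence it suffices to reason from $j\in U(f_{i0})$, $i\in U(f_{jc})$, $k\in U(f_{i1})$, $i\in U(f_{kd})$. Now $i\in U(f_{jc})$ gives $(f_{i0})|_{x_j=c}=(f_{jc})|_{x_i=0}=(f_{jc})|_{x_i=1}=(f_{i1})|_{x_j=c}$, and since $j\in U(f_{i0})$ the left-hand side is just $f_{i0}$, so $f_{i0}=(f_{i1})|_{x_j=c}$; symmetrically $f_{i1}=(f_{i0})|_{x_k=d}$. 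Substituting the second identity into the first and using $j\in U(f_{i0})$ again forces $f_{i0}=(f_{i0})|_{x_k=d}$, i.e. $k\in U(f_{i0})$, whence $f_{i1}=(f_{i0})|_{x_k=d}=f_{i0}$. Therefore $f=(1-x_i)f_{i0}+x_i f_{i1}$ does not depend on $x_i$, so $i\in U(f)$; and since now $j,k\in U(f_{i0})\cap U(f_{i1})$, Proposition~\ref{juntabasic}(ii) yields $j,k\in U(f)$, giving $\{i,j,k\}\subseteq U(f)$.

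The step I expect to need the most care is not any single computation but the bookkeeping of the ambient variable set: ``useless'' is meaningful only relative to a fixed universe of variables, so every move between $f$, a restriction $f_{ib}$, and a further restriction $(f_{ib})|_{x_j=b'}$, and every appeal to Proposition~\ref{juntabasic}, must be matched with the right function on the right coordinates (the shorthand ``$(f_{ib})|_{x_j=b'}=f_{ib}$'' means exactly that $f_{ib}$, being independent of $x_j$, is recovered from that restriction). I would also have to clear the degenerate cases --- $j=k$, a cycle already of length two, or a supplied cycle that is not simple (replace it by a shortest closed walk through the marked edge) --- but in each the argument above applies verbatim once the two length-two cycles have been extracted, with the labels $c$ and $d$ left entirely arbitrary.
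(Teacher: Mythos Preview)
Your proposal is correct and follows essentially the same route as the paper. For Transitivity both arguments use commutativity of restrictions together with Proposition~\ref{juntabasic}(i), and your induction on the path length is exactly what the paper sketches. For Cycles the paper, like you, first reduces to two length-two cycles via Transitivity; it then finishes by tabulating the equalities among all eight restrictions $f|_{x_i,x_j,x_k}$ and checking they are connected, whereas you extract $k\in U(f_{i0})$ and then $f_{i0}=f_{i1}$ by two substitutions and invoke Proposition~\ref{juntabasic}(ii)---a slightly tidier organization of the same equalities, but not a different idea.
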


Observe that if $G_f$ has a cycle, say ${5}\overset{0}{\longrightarrow} {2}\overset{1}{\longrightarrow} {4}\overset{0}{\longrightarrow}5$, then we can apply Transitivity to reduce it to a collection of cycles of two vertices, e.g., ${5}\overset{0}{\longrightarrow}{4}\overset{0}{\longrightarrow}5$, etc. 

\begin{proof} (Transitivity:)
Note that $i,j,k$ are distinct, otherwise it is in fact a cycle. 
In the path, by definition of $j \overset{b_j}{\longrightarrow}k$ we have $f|_{x_j=b_j,x_k=0}=f|_{x_j=b_j,x_k=1}$. 
Observe that
$$\aligned
& f|_{x_{j}=b_j,x_{k}=0}= f|_{x_{j}=b_j,x_{k}=1} \\
& \Rightarrow 
f|_{x_i=b_i,x_{j}=b_j,x_{k}=0}= f|_{x_i=b_i,x_{j}=b_j,x_{k}=1}
\endaligned
$$
Besides, $i \overset{b_i}{\longrightarrow}j$ implies $f|_{x_i=b_i,x_j=b_j}=f|_{x_i=b_i,x_j=1-b_j}$. 
Hence,
$$\aligned
& f|_{x_i=b_i,x_j=b_j,x_k=0}= f|_{x_i=b_i,x_j=1-b_j,x_k=0} \\
& =
f|_{x_i=b_i,x_j=b_j,x_k=1}= f|_{x_i=b_i,x_j=1-b_j,x_k=1}
\endaligned
$$
That is, $k\in U(f_{i b_i})$.
 
For the general case of  $i\overset{b_i}{\longrightarrow}j_1\overset{b_{j_1}}{\longrightarrow}\cdots \overset{b_{j_{t-1}}}{\longrightarrow} j_t$, it is easy to prove by induction on $t$ with almost the same argument as above.

(Cycles:) 
First observe that for a given cycle $i\overset{b_i}{\longrightarrow}j\overset{b_j}{\longrightarrow}i $, by definition we have $f|_{x_i=b_i,x_j=b_j}=f|_{x_i=b_i,x_j=1-b_j}$ because of $i\overset{b_i}{\longrightarrow}j$, and $f|_{x_i=b_i,x_j=b_j}=f|_{x_i=1-b_i,x_j=b_j}$ because of $j\overset{b_j}{\longrightarrow}i$. Combining these two equations, it is clear to have $f|_{x_i=b_i,x_j=b_j}=f|_{x_i=b_i,x_j=1-b_j}=f|_{x_i=1-b_i,x_j=b_j}$. Note that there is no requirement on $f|_{x_i=1-b_i,x_j=1-b_j}$.

Now, for the given pair of cycles $i\overset{0}{\longrightarrow}j\overset{b_j}{\longrightarrow}i$ and $i\overset{1}{\longrightarrow}k\overset{b_k}{\longrightarrow}i$ in $G_f$, we can list the corresponding equations: 

$$
\left\{
\aligned
& f|_{x_i=0,x_j=b_j,x_k=b_k}=f|_{x_i=0,x_j=1-b_j,x_k=b_k}=f|_{x_i=1,x_j=b_j,x_k=b_k} \\ 
& f|_{x_i=0,x_j=b_j,x_k=1-b_k}=f|_{x_i=0,x_j=1-b_j,x_k=1-b_k}=f|_{x_i=1,x_j=b_j,x_k=1-b_k} \\ 
& f|_{x_i=1,x_j=b_j,x_k=b_k}=f|_{x_i=1,x_j=b_j,x_k=1-b_k}=f|_{x_i=0,x_j=b_j,x_k=b_k} \\
& f|_{x_i=1,x_j=1-b_j,x_k=b_k}=f|_{x_i=1,x_j=1-b_j,x_k=1-b_k}=f|_{x_i=0,x_j=1-b_j,x_k=b_k} \\ 
\endaligned
\right.
$$

Note that the above equations are related by $f|_{x_i=0,x_j=b_j,x_k=b_k}$ in the first and third equations, $f|_{x_i=0,x_j=1-b_j,x_k=b_k}$ in the first and fourth, and $f|_{x_i=1,x_j=b_j,x_k=1-b_k}$ in the second and third. 
Besides, they contain all eight configurations. 
This shows that $f$ is independent of $x_i, x_j$ and $x_k$.
I.e., $f$ is not sensitive on variables $x_i,x_j$ and $x_k$.
\end{proof}

\begin{definition}
For $f\in \mathcal{F}_n$ and $S\subseteq [n]$, $U1(f,S):=\bigcup_{i\in S}(U(f_{i0})\cup U(f_{i1}))$. In words, a variable $j$ is in $U1(f,S)$ if it is useless for a restriction of $f$ that fixes one variable to some value.
\end{definition}

\begin{prop} \label{closure}
(\textbf{Closure Property}:) For any $i\in[n]$ and $b\in \{0,1\}$, we have 
$$
U1(f,U(f_{ib}))\subseteq U(f_{ib})\cup \{i\}.
$$ 
Furthermore, if for every $k\in U(f_{ib})$ and every $b'\in \{0,1\}$, $i\not\in U(f_{kb'})$, then 
$$
U1(f,U(f_{ib}))\subseteq U(f_{ib}).
$$
\end{prop}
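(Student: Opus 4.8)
The plan is to derive both containments from the Transitivity half of Proposition~\ref{trans_cycle}, after clearing away the few degenerate cases in which the indices involved coincide.

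First I would unwind membership in the left-hand side. By definition, an index $\ell$ lies in $U1(f,U(f_{ib}))$ if and only if there exist $j\in U(f_{ib})$ and $b'\in\{0,1\}$ with $\ell\in U(f_{jb'})$; in the language of the digraph this says exactly that $G_f$ contains the two edges $i\overset{b}{\longrightarrow} j$ and $j\overset{b'}{\longrightarrow}\ell$. Note that $j\neq i$, since $U(f_{ib})\subseteq[n]\setminus\{i\}$ (the restriction $f_{ib}$ lives on the universe $[n]\setminus\{i\}$), and likewise $\ell\neq j$, since $U(f_{jb'})\subseteq[n]\setminus\{j\}$. Hence only two possibilities remain. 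Either $\ell=i$, in which case $\ell\in U(f_{ib})\cup\{i\}$ trivially; or $i,j,\ell$ are pairwise distinct, in which case $i\overset{b}{\longrightarrow} j\overset{b'}{\longrightarrow}\ell$ is a path on distinct vertices, so Transitivity (Proposition~\ref{trans_cycle}) produces the edge $i\overset{b}{\longrightarrow}\ell$, i.e. $\ell\in U(f_{ib})$. In both cases $\ell\in U(f_{ib})\cup\{i\}$, which is the first inclusion.

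For the sharper inclusion I would show that the extra hypothesis kills the case $\ell=i$ from the previous paragraph. Indeed, if $\ell=i$ then $\ell\in U(f_{jb'})$ becomes $i\in U(f_{jb'})$ for some $j\in U(f_{ib})$ and some bit $b'$; applying the hypothesis with $k=j$ (which lies in $U(f_{ib})$) contradicts $i\notin U(f_{kb'})$. So under the hypothesis only the pairwise-distinct case survives, and there Transitivity again gives $\ell\in U(f_{ib})$. Hence $U1(f,U(f_{ib}))\subseteq U(f_{ib})$.

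I do not expect a genuine obstacle here: the statement is essentially a repackaging of Transitivity. The one thing to be careful about is the bookkeeping of coincident indices — Proposition~\ref{trans_cycle} asserts Transitivity only for distinct $i,j,k$, so the possibilities $\ell=i$, $\ell=j$, and $j=i$ must be handled (two of them are ruled out automatically by the convention that $f_{ib}$ is a function on $[n]\setminus\{i\}$) before the path can be contracted.
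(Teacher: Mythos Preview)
Your proof is correct and follows essentially the same approach as the paper's: both arguments reduce the closure property to the Transitivity part of Proposition~\ref{trans_cycle}, handling separately the degenerate case in which the target vertex coincides with $i$. Your bookkeeping of coincident indices is more explicit than the paper's, but the underlying argument is the same.
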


\begin{proof}
For any $j\in U(f_{ib_i})$ and $j'\in U(f_{jb_j})\setminus \{i\}$ (where $b_i,b_j\in\{0,1\}$) we can append ${j}\overset{b_j}{\longrightarrow} j'$ to $i\overset{b_i}{\longrightarrow} j$ and obtain a path. 
If for some $j\in U(f_{ib_i})$, $i\in U(f_{j0})\cup U(f_{j1})$, then  we have $U(f_{j0})\cup U(f_{j1})\subseteq U(f_{ib_i})\cup \{i\}$. Hence,
$$
U1(f;U(f_{ib_i}))
=\bigcup_{j\in U(f_{ib_i}),b=0,1}{U(f_{jb})}
\subseteq U(f_{ib_0})\cup \{i\}.
$$

On the other hand, if $\forall j\in U(f_{ib_i}), i\not\in U(f|_{j0})\cup U(f|_{j1})$, then $U(f_{j0})\cup U(f_{j1})\subseteq U(f_{ib_i})$. This concludes that
$U1(f,U(f_{ib}))\subseteq U(f_{ib})$. 

\end{proof}

\begin{proof} (\textbf{of Theorem \ref{no_junta}}) 
We prove the theorem by contraction. So, assume $f$ is nondegenerate but every restriction $f_{ib}$, $i\in [n]$ is degenerate.  
Then we have $U(f|_{x_i=0}) \neq \emptyset$ and $U(f|_{x_i=1})\neq \emptyset$ for all $i\in [n]$. 

Thus, we can assume that $f$ satisfies the following property:\\
$$
(*)\mbox{  } \forall i\in [n], U(f|_{x_i=0}) \neq \emptyset \mbox{ and } U(f|_{x_i=1})\neq \emptyset.
$$

We prove the claim below:\\
\textbf{Main Claim:} {If $f$ has property $(*)$ and $U(f;S)\subseteq S$ for some $S\subseteq [n]$ with $|S|\geq 2$, then $\exists j\in S$ such that $f|_{j0}= f|_{j1}$, i.e., $j$ is useless for $f$.}

Since $U1(f,[n])\subseteq [n]$ and $n\geq 2$, the claim can be applied with $S=[n]$ and it implies $f$ is degenerate. This is a contraction.

\end{proof}

\begin{proof} (\textbf{of Main Claim}) We prove this by induction on $|S|$.

Base case ($|S|=2$): 
Note that $|S|=2$ with $U(f;S)\subseteq S$ implies $U(f|_{i0})=\{j\}$ and $U(f|_{i1})=\{j\}$. Since $j\in U(f_{i0})\cap U(f_{i1})$, Proposition \ref{juntabasic}(ii) shows $j\in S$ is useless for $f$.

Inductive step:
Assume $|S|=s\geq 3$ and the claim holds for all $S$ with $|S|\leq s-1$.

Consider an arbitrary vertex $j^{*}\in S$. By $(*)$ there is at least one $0$-edge $j^{*}\overset{0}{\longrightarrow} u$ and at least one $1$-edge $j^{*}\overset{1}{\longrightarrow} v$. As shown in Figure \ref{case1}, if $u=v$ then $u\in U(f_{j^{*}0})\cap U(f_{j^{*}1})$ and by Proposition \ref{juntabasic}(ii), $u$ is useless and we are done. So, we may assume $u\neq v$. If there is a directed path from $u$ back to $j^{*}$ and a directed path from $v$ back to $j^{*}$ (as shown in Figure \ref{case2}), then $j^{*},u,v$ satisfy the Cycles property of Proposition \ref{trans_cycle} and all of them are useless for $f$ and we are again done.  

Hence, we can assume that there is no cycle \textit{via}, w.l.o.g., any $1$-edge leaving $j^{*}$ back to $j^{*}$. This case is shown in Figure \ref{case3}, where we use the dashed edge to emphasize $G_f$ does not have such edge. Let $S':=U(f_{j^{*}1})$. By the foregoing assumption, no $k\in S'$ can have an edge going back to $j^{*}$ and hence $j^{*}\not\in S'$. By the Closure property (Proposition \ref{closure}, the "furthermore" part), we have $U1(f,S')\subseteq S'$. We also have that $|S'|\leq s-1$. Moreover, $|S'|\geq 2$ since there is at lest one $1$-edge $j^{*}\overset{1}{\longrightarrow} v$ and by $(*)$, there are two outgoing edges out of $v$ neither of which can go back to $j^{*}$ (the heads of those two edges could coincide $--$ however, that'd immediately imply the claim). Hence we can apply the induction hypothesis to $S'$ and conclude there is a vertex $w\in S'$ that is useless for $f$. Note that $S'\subset S$ by the Transitivity property since $j^{*}\in S$ and $U1(f,S)\subseteq S$. Hence $w\in S$ and is useless for $f$.   
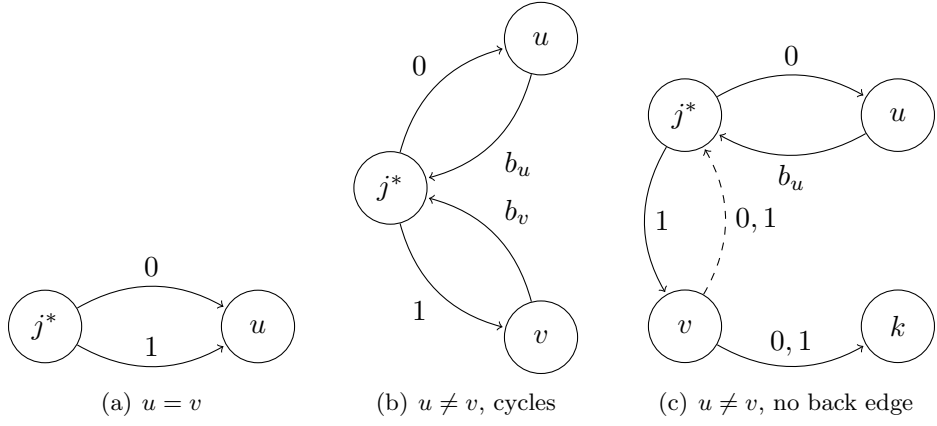
\begin{figure}
\centering
\subfigure[$u=v$]{
\begin{tikzpicture}[shorten >=1pt,node distance=2.8cm,on grid,auto] 
   \node[state] (j)  {$j^*$};
   \node[state] (u) [right=of j]{$u$};  
   \path[->] 
    (j) edge  [bend left] node  {$0$} (u)
        edge  [bend right] node {$1$} (u);
\end{tikzpicture}
\label{case1}}
\quad
\subfigure[$u\neq v$, cycles]{
\begin{tikzpicture}[shorten >=1pt,node distance=2.8cm,on grid,auto] 
   \node[state] (j)  {$j^*$};
   \node[state] (u) [above right =of j]{$u$};  
   \node[state] (v) [below right =of j]{$v$};  
   \path[->] 
    (j) edge  [bend left] node  {$0$} (u)
        edge  [bend right] node [swap] {$1$} (v)
    (u) edge  [bend left] node  {$b_u$} (j)
    (v) edge  [bend right] node [swap] {$b_v$} (j)   ;
\end{tikzpicture}
\label{case2}}
\quad
\subfigure[$u\neq v$, no back edge]{
\begin{tikzpicture}[shorten >=1pt,node distance=2.8cm,on grid,auto] 
   \node[state] (j)  {$j^*$};
   \node[state] (u) [right=of j]{$u$};  
   \node[state] (v) [below=of j]{$v$}; 
   \node[state] (k) [right=of v]{$k$};       
   \path[->] 
    (j) edge  [bend left] node  {$0$} (u)
        edge  [bend right] node {$1$} (v)
    (u) edge  [bend left] node  {$b_u$} (j)
    (v) edge  [bend right] node {$0,1$} (k);
   \path[->, dashed]    
    (v) edge [bend right] node [swap]{$0,1$} (j);
\end{tikzpicture}
\label{case3}}
\caption{ Graphs for Inductive step} 
\end{figure}  

\end{proof}

Many complexity measures are hard to be obtained for general functions but easier for that with some property (such as monotone or symmetric). This suggests we can get some complexity bounds of $f$ through its subfunction $f|_{\rho_I}$.
However, as the example of the $\mathsf{Address}$ function illustrates, even $f\in \mathcal{F}_{n}^*$, usually 
$f|_{\rho_I}\not\in \mathcal{F}_{n-|I|}^*$. 
The implication of Theorem \ref{no_junta} promises that for given $0<k<n$ there is one restriction ${\rho_I}$ with $n-|I|=k$ such that $f|_{\rho_I}$ is no-junta. 
An crucial observation is that any no-junta and symmmetric $f|_{\rho_I}\in \mathcal{F}_{n-|I|}^*$; this motivates us to derive a degree lower bound for non-symmetric Boolean functions from symmetric subfunctions.

\section{Degree Bounds} 

\begin{theorem}
Let $f\in \mathcal{F}_{n}^*$ be symmetric, $p_i$'s be distinct primes, and $r$ and $e_i$'s be positive integers. 
Let $m=\Pi_{i=1}^{r}p_i^{e_i}$. Then
$$
m\cdot d_{p_1^{e_1}}(f)\cdots d_{p_r^{e_r}}(f) > n.
$$
\end{theorem}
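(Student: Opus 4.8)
The plan is a proof by contradiction: from a putative bound $m\cdot d_{p_1^{e_1}}(f)\cdots d_{p_r^{e_r}}(f)\le n$ I would manufacture two short, coprime periods of the Hamming-weight profile of $f$ and collide them, forcing $f$ to be constant --- which is impossible for $f\in\mathcal F_n^{*}$. Write $d_i:=d_{p_i^{e_i}}(f)$. First I would pass to the weight profile: as noted just before Fact~\ref{sympoly}, the symmetry of $f$ makes its representing polynomial modulo $p_i^{e_i}$ symmetric, so Fact~\ref{sympoly} gives $f(A)\equiv\sum_{k=0}^{d_i}c_k^{(i)}\binom{|A|}{k}\pmod{p_i^{e_i}}$ for every $A\subseteq[n]$. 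Writing $f_j$ for the value of $f$ on weight-$j$ inputs, $f\in\mathcal F_n^{*}$ forces the finite sequence $(f_0,\dots,f_n)$ to be non-constant (a constant profile would mean $f$ is a constant function, hence a junta), and in particular each $d_i\ge 1$.

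Next I would read periodicity off Lemma~\ref{cycle}. Applied to the single prime power $p_i^{e_i}$ with $k=0,1,\dots,d_i$, it yields $\binom{sL_i+j}{k}\equiv\binom{j}{k}\pmod{p_i^{e_i}}$ for $L_i:=p_i^{\,e_i+\lfloor\log_{p_i}d_i\rfloor}$, hence $f_{j+sL_i}\equiv f_j\pmod{p_i^{e_i}}$ whenever $j$ and $j+sL_i$ both lie in $\{0,\dots,n\}$. Since $f_j\in\{0,1\}$ and $p_i^{e_i}\ge 2$, a difference divisible by $p_i^{e_i}$ and lying in $\{-1,0,1\}$ must vanish, so the congruence upgrades to the equality $f_{j+sL_i}=f_j$: the sequence $(f_0,\dots,f_n)$ genuinely has period $L_i$. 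I would also record the crude bound $L_i\le p_i^{e_i}d_i$, which holds because $p_i^{\lfloor\log_{p_i}d_i\rfloor}\le d_i$.

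Now I would collide two periods. Assume $m\prod_i d_i\le n$. Using $p_j^{e_j}d_j\ge 1$ for every $j$ and keeping only the first two factors, $L_1L_2\le(p_1^{e_1}d_1)(p_2^{e_2}d_2)\le m\prod_i d_i\le n$; this is the one point where the argument genuinely uses that $m$ is divisible by at least two distinct primes. Since $L_1,L_2$ are powers of distinct primes, $\gcd(L_1,L_2)=1$, and the word $(f_0,\dots,f_n)$ has length $n+1\ge L_1L_2+1\ge L_1+L_2>L_1+L_2-\gcd(L_1,L_2)$, where the middle inequality is $(L_1-1)(L_2-1)\ge 0$. By the Fine--Wilf theorem, a word of length at least $L_1+L_2-\gcd(L_1,L_2)$ that has periods $L_1$ and $L_2$ has period $\gcd(L_1,L_2)=1$, i.e.\ is constant; this contradicts the non-constancy of the profile, and therefore $m\prod_i d_i>n$.

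The main obstacle is precisely this period collision. Invoking Fine--Wilf is the cleanest route, but for a self-contained proof one can instead show directly that $f_j=f_{j+1}$ for every $j$, by exhibiting a ``staircase'' path from index $j$ to index $j+1$ that alternately adds $L_1$ and subtracts $L_2$: $\gcd(L_1,L_2)=1$ makes such a path exist, $n\ge L_1L_2$ keeps it inside $\{0,\dots,n\}$, and each step leaves $f$ unchanged by one of the two periodicities --- which is exactly the combinatorial content of Fine--Wilf. Two smaller points to handle with care are the symmetry of the reduced polynomial modulo $p_i^{e_i}$ (needed for Fact~\ref{sympoly} to apply there), and the rounding step that turns a congruence modulo $p_i^{e_i}$ between two elements of $\{0,1\}$ into an honest equality.
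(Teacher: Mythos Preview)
Your argument is correct and, like the paper's, tacitly needs $r\ge 2$ (with a single prime power there is only one period and neither proof goes through; you flag this explicitly). Both proofs extract from Lemma~\ref{cycle} that the weight profile $(f_0,\dots,f_n)$ is $L_i$-periodic modulo $p_i^{e_i}$ with $L_i=p_i^{e_i+\lfloor\log_{p_i}d_i\rfloor}$, but they finish differently. The paper never upgrades the congruences to equalities and never invokes Fine--Wilf; instead it fixes a weight $\tau$ with $f_0=0$, $f_\tau=1$ and uses the Chinese Remainder Theorem on all $r$ moduli $L_1,\dots,L_r$ to produce a single weight $|A|<\prod_i L_i$ with $|A|\equiv\tau\pmod{L_1}$ and $|A|\equiv0\pmod{L_i}$ for $i\ge2$, so that $P_1(A)\equiv1$ while $P_2(A)\equiv0$; this forces $|A|>n$ and hence $n<\prod_i L_i\le m\prod_i d_i$. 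Your route first rounds the mod-$p_i^{e_i}$ congruences to honest periods of the $\{0,1\}$-valued profile and then collides just two coprime periods via Fine--Wilf. That is a bit cleaner, uses only two of the $r$ prime powers, and in fact yields the sharper intermediate inequality $n<L_1L_2$ before relaxing to the stated bound; the paper's CRT construction is in return entirely self-contained and avoids citing an external combinatorial lemma. The two are close cousins---your ``staircase'' description of Fine--Wilf is essentially the CRT step unwound---but the packaging is genuinely different.
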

The idea of the proof is to find a particular subset $A$ such that the output values of representing polynomials on $A$ are distinct over two $\mathbb{Z}_{p_i^{e_i}}$'s, which is a contradiction; hence this $A$ must not exist.   
\begin{proof}
We only have to consider $f\in \mathcal{F}_{n}^*$ satisfying $f(\emptyset)=0$ since $\forall i$, $d_{p_i^{e_i}}(f)=d_{p_i^{e_i}}(\neg f)$. 
Besides, $f$ is no-junta and symmetric means there exists a weight $\tau\in [n]$ s.t. $\forall D\subseteq [n]$ with $|D|=\tau \leq d_{min}$, $f(D)=1$, where $d_{min}=\min_{i\in [r]}(d_{p_i^{e_i}}(f))$. W.L.O.G. let $d_{min}=d_1$. 
Let $P_{i}(\bfv{x})\in \mathbb{Z}_{p_i^{e_i}}[\bfv{x}]$ s.t. $P_{i}(\bfv{x})\pmod{p_i^{e_i}} = f(\bfv{x})$ for all $\bfv{x}\in \{0,1\}^n$. Denote $d_i=\deg(P_i(\bfv{x}))$.
Since $f$ is symmetric, each $P_i(A)$ can be written as $\sum_{k=0}^{d_i}{c_{P_i,k}\binom{|A|}{k}}$ by Fact \ref{sympoly}. 
Define $L_{i}=p_i^{e_i+\lfloor \log_{p_i}{d_i} \rfloor}$.
Then by Lemma \ref{cycle} for any nonnegative integers $s$, $j$ and $k$, we have 
$$
\forall k\leq d_{min}, \binom{sL_{i}+j}{k}  \equiv_{p_i^{e_i}} \binom{j}{k}.
$$
Consider $A\subseteq [n]$ such that 
$|A|\equiv \tau \pmod{L_{1}}$ and for $i\geq 2$, $|A|\equiv 0 \pmod{L_{i}}$. 
Let $L:=\Pi_{i=1}^r {L_i}$ and $L_i^{'}=L/L_i$. 
Then by the CRT, the unique solution (in the sense of modular convergence) is
$$
\aligned
|A|=
& \sum_{i=2}^r 0\cdot L_i^{'}\cdot [{L_i^{'}}^{-1}\pmod{L_i}]\\
   & +\tau\cdot L_1^{'}\cdot [{L_1^{'}}^{-1}\pmod{L_1}] \pmod{L}
    \leq L
\endaligned
$$
For this $|A|$ and by Fact \ref{sympoly}, 
$$\aligned
P_1(A)&=\sum_{k=0}^{d_1}{c_{P_1,k}\binom{|A|}{k}}
    \equiv_{p_1^{e_1}} \sum_{k=0}^{d_1}{c_{P_1,k}\binom{\tau}{k}} 
    =P_1(D_{\tau})
    \equiv_{p_1^{e_1}}1\\
P_2(A)&=\sum_{k=0}^{d_2}{c_{P_2,k}\binom{|A|}{k}}
    \equiv_{p_2^{e_2}} \sum_{k=0}^{d_2}{c_{P_2,k}\binom{0}{k}} 
    =P_2(\emptyset)
    \equiv_{p_2^{e_2}}0\\
\cdots \\    
P_r(A)&=\sum_{k=0}^{d_r}{c_{P_r,k}\binom{|A|}{k}}
    \equiv_{p_r^{e_r}} \sum_{k=0}^{d_r}{c_{P_r,k}\binom{0}{k}} 
    =P_r(\emptyset)
    \equiv_{p_r^{e_r}}0
\endaligned    
$$
where $D_{\tau}$ is any subset of $[n]$ with size $\tau$. 
This is a contradiction unless $|A|>n$. Hence we have 
$n<|A|\leq L=\Pi_{i=1}^r {L_i}\leq m\cdot\Pi_{i=1}^r{d_i}$.

\end{proof}

Consider $r=2$ and $e_1=e_2=1$ in above Theorem, we have $d_{p_1}(f)d_{p_2}(f)=\Omega(n)$. 
\begin{corollary}
Let $f\in \mathcal{F}_{n}^*$ be symmetric, then $f$ can has degree $o(\sqrt{n})$ in at most one finite field.
\end{corollary}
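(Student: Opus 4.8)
The plan is to read the Corollary off the preceding Theorem specialized to $r=2$, $e_1=e_2=1$. The first step is to record the clean consequence in that case: for any symmetric $f\in\mathcal{F}_{n}^*$ and any two \emph{distinct} primes $p\neq q$, the Theorem with $m=pq$ gives $pq\cdot d_p(f)\,d_q(f)>n$, hence $d_p(f)\,d_q(f)>n/(pq)$; in particular $\max\{d_p(f),d_q(f)\}>\sqrt{n/(pq)}$, since if both degrees were at most $\sqrt{n/(pq)}$ their product would be at most $n/(pq)$.

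The second step is to make the $o(\sqrt n)$ assertion precise: it refers to a family $\{f_n\}$ of symmetric non-junta functions with $f_n\in\mathcal{F}_{n}^*$, and "$f$ has exact representing degree $o(\sqrt n)$ over $\mathbb{F}_p$" means $d_p(f_n)=o(\sqrt n)$. Suppose toward a contradiction that there are two distinct primes $p\neq q$ with $d_p(f_n)=o(\sqrt n)$ and $d_q(f_n)=o(\sqrt n)$ (witnessed by possibly different bounding functions). Fixing the pair $p,q$ first makes $1/\sqrt{pq}$ a fixed positive constant, so for all sufficiently large $n$ we have simultaneously $d_p(f_n)<\tfrac{1}{2}\sqrt{n/(pq)}$ and $d_q(f_n)<\tfrac{1}{2}\sqrt{n/(pq)}$, whence $pq\cdot d_p(f_n)\,d_q(f_n)<\tfrac{1}{4}n<n$, contradicting the Theorem. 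Thus $\{\,p\text{ prime}: d_p(f_n)=o(\sqrt n)\,\}$ has at most one element, which is the claim.

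If "finite field" is read literally as also including $\mathbb{F}_{p^e}$ with $e\ge 2$, one extra observation closes the gap: the exact representing degree of a Boolean function over $\mathbb{F}_{p^e}$ equals its degree over the prime subfield $\mathbb{F}_p$, because the multilinear polynomial $\sum_{D}f(D)\prod_{i\in D}x_i\prod_{i\notin D}(1-x_i)$ already has coefficients in $\mathbb{F}_p$ and the multilinear representation on $\{0,1\}^n$ is unique over any field, so $d_{\mathbb{F}_{p^e}}(f)=d_p(f)$ and the previous paragraph applies verbatim (and "at most one finite field" is then, strictly speaking, "at most one characteristic"). I do not anticipate a genuine obstacle: the entire content sits in the Theorem. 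The only points needing care are fixing the two primes \emph{before} sending $n\to\infty$ (so that the $\Omega(n)$ bound $n/(pq)$ has a legitimate, prime-dependent constant) and, if one wants the literal reading, the trivial prime-subfield reduction just noted.
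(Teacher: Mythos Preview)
Your proposal is correct and follows exactly the paper's approach: specialize the preceding Theorem to $r=2$, $e_1=e_2=1$ to obtain $d_p(f)d_q(f)=\Omega(n)$ for any two distinct primes, and read off the corollary. The paper in fact leaves the corollary unproved beyond that one-line specialization, so your careful asymptotic formulation and the prime-subfield reduction for general $\mathbb{F}_{p^e}$ are welcome additions rather than deviations.
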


Note that for a fixed $I^c$ of size $|I^c|=k$, there are exactly $2^{n-k}$ partial assignments that leave the variables in $I^c$ free. Furthermore, for fixed $k$ there are ${\binom {n}{k}}2^{n-k}$ $k$-subcubes. We denote by $\mathcal{R}_{k,n}$ the set of restrictions that leave $k$ of $n$ variables free, i.e. $|\mathcal{R}_{k,n}|={\binom {n}{k}}2^{n-k}$. 
For a fixed $\rho\in \mathcal{R}_{k,n}$, we choose $f\in \mathcal{F}_n$ uniformly at random. 
Let $\mathbb{I}_{\rho}(f)$ be the indicator variable of the event that $f|_{\rho}$ is a non-junta symmetric subfunction on the corresponding $k$-subcube. 
Therefore, if $\mathbb{I}_{\rho}(f)=1$ then $f|_{\rho}\in \mathcal{F}_{k}^*$. Observe that $\Pr_{f\in \mathcal{F}_n}\left[\mathbb{I}_{\rho}(f)=1 \right]=(2^{k+1}-2)\big/{2^{2^k} }$.  

A natural question is how small can be $k$ to make sure the existence of a no-junta symmetric subfunction. We are going to apply the second moment method which in fact is a corollary of $\Pr[X=0]\le Var[X]/{E[X]^2}$.
\begin{lemma}(See \cite{AS08}.)\label{probmd}
Let $X=\sum_{i=1}^{M}X_i$, where $X_i$ is the indicator random variable for the event $A_i$. Denote $i\backsim j$ for the events $A_i$, $A_j$ that are not independent and define $\Delta=\sum_{i\backsim j}\Pr[A_i\wedge A_j]$. Thus $Var[X]\leq E[X]+\Delta$. If $E[X]\to \infty$ and $\Delta=o({E[X]}^2)$ then $X>0$ almost always. 
\end{lemma}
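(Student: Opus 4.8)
The plan is to establish both assertions of Lemma \ref{probmd} by the standard second-moment computation. First I would expand the variance using bilinearity of covariance, $Var[X]=\sum_{i=1}^{M}\sum_{j=1}^{M}\mathrm{Cov}(X_i,X_j)$, and split this finite double sum into the diagonal part ($i=j$) and the off-diagonal part ($i\neq j$). For the diagonal, since each $X_i$ is a $0/1$ indicator, $Var[X_i]=\Pr[A_i]-\Pr[A_i]^2\le\Pr[A_i]$, so the diagonal contributes at most $\sum_{i=1}^{M}\Pr[A_i]=E[X]$. For the off-diagonal part, whenever $A_i$ and $A_j$ are independent we have $\mathrm{Cov}(X_i,X_j)=0$; hence only the dependent ordered pairs $i\backsim j$ survive, and for each of them $\mathrm{Cov}(X_i,X_j)=\Pr[A_i\wedge A_j]-\Pr[A_i]\Pr[A_j]\le\Pr[A_i\wedge A_j]$. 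Summing over all such pairs gives at most $\Delta$ by the definition of $\Delta$. Adding the two contributions yields $Var[X]\le E[X]+\Delta$, which is the first claim.

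For the second claim I would invoke the inequality $\Pr[X=0]\le Var[X]/E[X]^2$ recorded just before the lemma statement — this is itself immediate from Chebyshev's inequality, since the event $X=0$ forces the one-sided deviation $|X-E[X]|=E[X]\ge E[X]$ (note $E[X]>0$ eventually, as $E[X]\to\infty$). Combining with the variance bound just proved,
$$
\Pr[X=0]\ \le\ \frac{Var[X]}{E[X]^2}\ \le\ \frac{E[X]+\Delta}{E[X]^2}\ =\ \frac{1}{E[X]}+\frac{\Delta}{E[X]^2}.
$$
Under the hypotheses $E[X]\to\infty$ and $\Delta=o(E[X]^2)$, the first term tends to $0$ and the second term tends to $0$ by definition of little-$o$; hence $\Pr[X=0]\to 0$, i.e., $X>0$ almost always.

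There is essentially no substantial obstacle here; the result is the textbook second-moment method, and the argument only manipulates a finite sum, so no convergence or measurability issues arise. The only points needing minor care are (i) that the covariance vanishes \emph{exactly} on independent pairs, which is precisely where the relation $i\backsim j$ enters, and (ii) that Chebyshev's inequality is applied with the deviation threshold $t=E[X]$ so that the event $\{X=0\}$ is correctly contained in $\{|X-E[X]|\ge t\}$.
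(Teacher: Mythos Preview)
Your proof is correct and is exactly the standard second-moment argument. Note, however, that the paper does not actually prove this lemma: it is stated with the citation ``(See \cite{AS08})'' and used as a black box, so there is no paper-proof to compare against. Your write-up supplies precisely the textbook derivation from Alon--Spencer that the citation points to.
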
 

Here we consider the above $X_i$ as $\mathbb{I}_{\rho_{i}}(f)$ and $A_i$ as the event that $f|_{\rho_i}$ is no-junta and symmetric. To acquire an upper bound of $\Delta$, we need the following fact.

\begin{fact}\label{dep}
For a fixed $k$-subcube $S$ in an $n$-cube, there are at most $\binom{n}{k}\cdot 2^k$ $k$-subcubes  intersecting with $S$. 
\end{fact}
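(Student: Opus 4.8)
The plan is to parametrize $k$-subcubes explicitly and count directly. Write the fixed $k$-subcube $S$ as a pair $(J^c,\sigma)$, where $J^c\subseteq[n]$ with $|J^c|=k$ is its set of free coordinates and $\sigma\in\{0,1\}^{J}$ (with $J=[n]\setminus J^c$) is the partial assignment defining it, so that $S=\{\bfv{x}: \bfv{x}|_{J}=\sigma\}$. Every $k$-subcube $T$ arises the same way as a pair $(I^c,\rho)$ with $|I^c|=k$ and $\rho\in\{0,1\}^{I}$, $I=[n]\setminus I^c$; this correspondence is a bijection between $k$-subcubes and such pairs, so there is no double-counting to worry about.

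First I would establish the intersection criterion: $S\cap T\neq\emptyset$ if and only if $\rho$ and $\sigma$ agree on $I\cap J$. Indeed, a point $\bfv{x}$ in both subcubes must match $\sigma$ on $J$ and $\rho$ on $I$; on $I\cap J$ both constraints are active (hence consistency is required), on $I\cap J^c$ only $\rho$ constrains $\bfv{x}$, on $I^c\cap J$ only $\sigma$ does, and on $I^c\cap J^c$ neither does, so as long as $\rho$ and $\sigma$ are consistent on $I\cap J$ a common point exists.

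Next I would count the admissible pairs $(I^c,\rho)$. There are $\binom{n}{k}$ choices for the free set $I^c$. Having fixed $I^c$ (and hence $I$), the assignment $\rho$ is forced to equal $\sigma$ on $I\cap J$ and is arbitrary on $I\cap J^c$; since $|I\cap J^c|\le|J^c|=k$, there are at most $2^{|I\cap J^c|}\le 2^k$ admissible $\rho$'s. Multiplying, the number of $k$-subcubes meeting $S$ is at most $\binom{n}{k}\cdot 2^k$, as claimed.

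I do not expect a genuine obstacle here: the argument is a short counting exercise once the intersection criterion is in place. The only point needing care is the coordinate bookkeeping in that criterion — keeping straight which coordinates are doubly constrained, singly constrained, or unconstrained — and observing that the count above also includes $S$ itself, which is harmless for an upper bound.
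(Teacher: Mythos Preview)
Your argument is correct, but it differs from the paper's route. The paper proves the bound by a union bound over the vertices of $S$: any $k$-subcube meeting $S$ must contain at least one of the $2^k$ vertices of $S$, and each vertex of the $n$-cube lies in exactly $\binom{n}{k}$ $k$-subcubes (one for each choice of the free coordinate set), giving at most $\binom{n}{k}\cdot 2^k$ in total. You instead parametrize the candidate subcube $T$ directly by its free set $I^c$ and its fixed assignment $\rho$, derive the intersection criterion, and count admissible $\rho$'s for each $I^c$. Your approach is a bit longer but more structural: it actually yields the exact count $\sum_{I^c}2^{|I\cap J^c|}$ before you relax it to the stated upper bound, whereas the paper's vertex argument overcounts whenever an intersecting subcube shares more than one vertex with $S$. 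Either method is perfectly adequate here since only the crude upper bound is used downstream.
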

\begin{proof}
For each vertex of the $n$-cube, it can only belong to at most $\binom{n}{k}$ different $k$-subcubes. A fixed $k$-subcubes contains $2^k$ vertices, so there are at most $\binom{n}{k}\cdot 2^k$ $k$-subcubes intersecting with the given $S$. 
\end{proof}

\begin{lemma} \label{aalgn}
For large $n$, if $k\leq \lg{n}-1$ then
almost always each $f\in \mathcal{F}_n$ has at least one $k$-variable symmetric subfunction.
\end{lemma}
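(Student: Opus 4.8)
The plan is to apply the second moment method of Lemma~\ref{probmd} to the random variable
\[
X \;=\; \sum_{\rho\in\mathcal{R}_{k,n}}\mathbb{I}_{\rho}(f),
\]
where $f$ is drawn uniformly from $\mathcal{F}_n$ and $\mathbb{I}_{\rho}(f)$ indicates the event $A_\rho$ that $f|_{\rho}$ is a no-junta symmetric subfunction on its $k$-subcube. Since $\mathbb{I}_{\rho}(f)=1$ in particular forces $f|_{\rho}$ to be symmetric, it suffices to show that $X>0$ almost always; this in fact gives the stronger conclusion that the $k$-variable symmetric subfunction produced lies in $\mathcal{F}_k^{*}$, which is what the subsequent non-symmetric degree bound will need. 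We may assume $1\le k\le\lg n-1$ (for $k=0$ every full restriction of $f$ is a constant, hence symmetric, and the statement is trivial).

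The first step is the first moment. By linearity of expectation and the recorded value $\Pr_{f}[\mathbb{I}_{\rho}(f)=1]=(2^{k+1}-2)/2^{2^k}$,
\[
E[X] \;=\; \binom{n}{k}\,2^{n-k}\cdot\frac{2^{k+1}-2}{2^{2^k}}\;\ge\; \binom{n}{k}\,2^{\,n-2^k}\;\ge\; 2^{\,n-2^k},
\]
using $2^{k+1}-2\ge 2^{k}$ for $k\ge 1$. Since $k\le\lg n-1$ forces $2^{k}\le n/2$, this gives $E[X]\ge 2^{n/2}$, which tends to infinity.

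The second step bounds $\Delta=\sum_{\rho\sim\rho'}\Pr[A_{\rho}\wedge A_{\rho'}]$. A uniformly random $f\in\mathcal{F}_n$ assigns independent uniform bits to the $2^n$ vertices of the cube, so $A_{\rho}$ and $A_{\rho'}$ depend on disjoint sets of bits, hence are independent, whenever the two $k$-subcubes are disjoint; thus $\rho\sim\rho'$ only if the subcubes intersect. By Fact~\ref{dep} a fixed $k$-subcube meets at most $\binom{n}{k}2^{k}$ of them, so the number of dependent ordered pairs is at most $|\mathcal{R}_{k,n}|\cdot\binom{n}{k}2^{k}=\binom{n}{k}^{2}\,2^{n}$. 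With the crude bound $\Pr[A_{\rho}\wedge A_{\rho'}]\le\Pr[A_{\rho}]=(2^{k+1}-2)/2^{2^k}$,
\[
\Delta \;\le\; \binom{n}{k}^{2}\,2^{n}\cdot\frac{2^{k+1}-2}{2^{2^k}} \;\le\; \binom{n}{k}^{2}\,2^{\,n+k+1-2^k}.
\]
Combining with $E[X]^{2}\ge\binom{n}{k}^{2}\,2^{\,2n-2^{k+1}}$ yields
\[
\frac{\Delta}{E[X]^{2}} \;\le\; 2^{\,k+1+2^{k}-n},
\]
whose exponent is at most $\lg n-n/2$ because $2^{k}\le n/2$ and $k\le\lg n-1$, hence it tends to $-\infty$ and $\Delta=o(E[X]^{2})$. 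Lemma~\ref{probmd} then gives $X>0$ almost always, which is the claim.

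Everything here is elementary; the one place needing care is the exponent accounting in the last display. The apparent danger is that $E[X]$ carries a factor $2^{-2^k}$ while $E[X]^{2}$ carries $2^{-2^{k+1}}$, so the ratio $\Delta/E[X]^2$ reintroduces a factor $2^{2^k}$; this is precisely cancelled by the $2^{-n}$ arising from $2^{n}$ in the pair count versus $2^{2n}$ in $E[X]^2$, and the cancellation goes through exactly because the hypothesis forces $2^{k}\le n/2$. In other words, $k\le\lg n-1$ is the sharp threshold making both $E[X]\to\infty$ and $\Delta=o(E[X]^2)$ hold simultaneously, so no sharper estimate of $\Pr[A_\rho\wedge A_{\rho'}]$ than $\Pr[A_\rho]$ is required. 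The only other point worth a sentence is the independence of $A_\rho,A_{\rho'}$ on disjoint subcubes, which is immediate from the product structure of the uniform distribution on $\mathcal{F}_n$.
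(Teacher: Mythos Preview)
Your proof is correct and follows essentially the same approach as the paper's own argument: the second moment method via Lemma~\ref{probmd}, with the dependent-pair count bounded by Fact~\ref{dep} and the crude estimate $\Pr[A_\rho\wedge A_{\rho'}]\le\Pr[A_\rho]$, leading to the identical final ratio $\Delta/E[X]^2\le 2^{\,2^k+k+1-n}$. The only cosmetic differences are that you lower-bound $E[X]$ directly by $2^{n/2}$ rather than via $n^k/k^k$, and you make explicit the independence of $A_\rho,A_{\rho'}$ on disjoint subcubes, which the paper leaves implicit.
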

\begin{proof}
Define $\mathbb{I}(f)=\sum_{\rho\in \mathcal{R}_{k,n}}{\mathbb{I}_{\rho}(f)}$.  
The goal is to show  $E_{f\in \mathcal{F}_n}\left[\mathbb{I}(f) \right]\to \infty$ and $\Delta=o({E_{f\in \mathcal{F}_n}\left[\mathbb{I}(f) \right]}^2)$. Then by Lemma \ref{probmd}, we have $\mathbb{I}(f)>0$ almost always. Since $\mathbb{I}(f)$ is the summation of indicators, this means almost always $\mathbb{I}(f)\geq 1$.

First observe that
$$
E_{f\in \mathcal{F}_n}\left[\mathbb{I}(f) \right] 
= {\binom {n}{k}}2^{n-k}\cdot\frac{ (2^{k+1}-2)}{2^{2^k}} $$
$$ > \frac{{\binom {n}{k}}2^{n-k}\cdot 2^{k}}{2^{2^k}}
> \frac{n^k 2^n}{k^k 2^{2^k}},
$$
where the last term is by $\binom {n}{k}>n^k/k^k$. It is easy to see that if $k\leq \lg{n}$ then  $2^{2^{k}}\leq 2^n$ and $n^k\gg k^k$. Therefore, as $n\to \infty$ and $k\leq \lg{n}$, $E_{f\in \mathcal{F}_n}\left[\mathbb{I}(f) \right]\to \infty$. 
Furthermore,
$$
\frac{1}{\left(E_{f\in \mathcal{F}_n}\left[\mathbb{I}(f) \right]\right)^2}
<\left( \frac{2^{2^k}}{ {\binom {n}{k}}\cdot 2^{n} } \right)^2 .
$$
On the other hand, by Fact \ref{dep}, 
$$
\Delta=\sum_{i\backsim j}\Pr[A_i\wedge A_j] 
< {\binom {n}{k}}2^{n-k} \cdot \binom{n}{k}2^k \cdot \frac{2^{k+1}}{2^{2^k}}.  
$$
Combining the above two inequalities,  we get
$$
\frac{\Delta}{\left(E_{f\in \mathcal{F}_n}\left[\mathbb{I}(f) \right]\right)^2}
<\frac{2^{2^k+k+1}}{2^n}
$$
Note that $k= \lg{n}-1\Rightarrow 2^k= n/2$. This makes ${\Delta}/{\left(E_{f\in \mathcal{F}_n}\left[\mathbb{I}(f) \right]\right)^2}\to 0$, i.e. $\Delta=o({\left(E_{f\in \mathcal{F}_n}\left[\mathbb{I}(f) \right]\right)^2})$. Hence by Lemma \ref{probmd}, if $n$ is large enough then $\mathbb{I}(f)\gg 1$ almost always.
\end{proof}


\begin{theorem}\label{thm_nonsymexact}
Let $f\in \mathcal{F}_{n}$ be a non-symmetric function.
Let $m=\Pi_{i=1}^{r}p_i^{e_i}$, where $p_i$'s are distinct primes, and $r$ and $e_i$'s are positive integers. 
Then almost always
$$
m\cdot d_{p_1^{e_1}}(f)\cdots d_{p_r^{e_r}}(f) > \lg{n}-1.
$$
\end{theorem}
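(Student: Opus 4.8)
The plan is to transport the symmetric degree lower bound (the Theorem on symmetric functions stated above, read with $\mathcal{F}_k^*$ in place of $\mathcal{F}_n^*$) to a random non-symmetric $f$ by routing it through the symmetric subfunction whose existence is guaranteed by Lemma \ref{aalgn}. First I fix $k$ to be the largest integer with $k\le \lg n-1$, that is, $k=\lfloor \lg n\rfloor-1$. By Lemma \ref{aalgn}, for all sufficiently large $n$ a uniformly random $f\in\mathcal{F}_n$ almost always admits a restriction $\rho\in\mathcal{R}_{k,n}$ for which $g:=f|_\rho$ is a non-junta symmetric function on the $k$ free variables, i.e.\ $g\in\mathcal{F}_k^*$. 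Applying the symmetric degree bound to $g$ then gives
$$
m\cdot d_{p_1^{e_1}}(g)\cdots d_{p_r^{e_r}}(g) > k .
$$

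The next step is the observation that a $0/1$-restriction never increases the exact representing degree over any ring $\mathbb{Z}_{p_i^{e_i}}$. Indeed, let $P_i\in\mathbb{Z}_{p_i^{e_i}}[\bfv{x}]$ be the unique multilinear polynomial with $P_i\equiv f \pmod{p_i^{e_i}}$ on $\{0,1\}^n$ and $\deg(P_i)=d_{p_i^{e_i}}(f)$. Substituting into $P_i$ the constants dictated by $\rho$ produces a multilinear polynomial in the free variables that agrees with $g$ modulo $p_i^{e_i}$ on all of $\{0,1\}^{I^c}$: every monomial of $P_i$ is either killed (some fixed variable occurring in it is set to $0$) or survives with its degree not increased (its fixed variables are set to $1$). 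Hence this polynomial has degree at most $d_{p_i^{e_i}}(f)$, and since $d_{p_i^{e_i}}(g)$ is by definition the minimal representing degree of $g$, we obtain $d_{p_i^{e_i}}(g)\le d_{p_i^{e_i}}(f)$ for every $i$.

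Combining the two facts, on the almost sure event that such a $\rho$ exists,
$$
m\cdot d_{p_1^{e_1}}(f)\cdots d_{p_r^{e_r}}(f) \;\ge\; m\cdot d_{p_1^{e_1}}(g)\cdots d_{p_r^{e_r}}(g) \;>\; k = \lfloor \lg n\rfloor-1 .
$$
Since the left-hand side is a positive integer, it is in fact at least $\lfloor \lg n\rfloor > \lg n-1$, which is exactly the claimed inequality, and it holds almost always because the event underlying Lemma \ref{aalgn} does.

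I do not expect a genuine obstacle here: the substantive work already sits in the symmetric Theorem and in Lemma \ref{aalgn}, and the argument above is essentially a reduction. The two points that must be handled with care are (i) verifying that the subfunction $g$ delivered by Lemma \ref{aalgn} really lies in $\mathcal{F}_k^*$ so that the symmetric bound applies — which is immediate, since a non-constant symmetric function of $k$ variables depends on all of them — and (ii) making the monotonicity of $d_{p_i^{e_i}}(\cdot)$ under $0/1$-restrictions explicit, as this is the only place where the $\mathbb{Z}_{p_i^{e_i}}$-representations of $f$ and of $g$ are linked.
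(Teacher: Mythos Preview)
Your proposal is correct and follows essentially the same route as the paper: invoke Lemma~\ref{aalgn} to obtain a no-junta symmetric restriction $g=f|_\rho$ on $k\approx\lg n-1$ variables, apply the symmetric degree bound to $g$, and lift via $d_{p_i^{e_i}}(f)\ge d_{p_i^{e_i}}(f|_\rho)$. Your extra care with the floor and the integrality step, and your explicit justification of degree monotonicity under $0/1$-restriction, are refinements rather than departures from the paper's (terser) argument.
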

\begin{proof}
For a given non-symmetric $f$, let $k=\lg{n}-1$, then the above lemma implies that almost always we can find a restriction $\rho\in \mathcal{R}_{k,n}$ for this $f$ such that $f|_{\rho}$ is no-junta and symmetric. Obviously $\forall i$ $d_{p_i^{e_i}}(f)\geq d_{p_i^{e_i}}(f|_{\rho})$.
Furthermore, $f|_{\rho}\in \mathcal{F}_{k}^*$. Hence, for $f|_{\rho}$ we have 
$m\Pi_{i=1}^{r}{d_{p_i^{e_i}}(f)} \geq m\Pi_{i=1}^{r}{d_{p_i^{e_i}}(f|_{\rho})} > \lg{n}-1$. 
\end{proof}

\begin{corollary}
With the same setting of the above theorem, and let $d_{max}=\max_{i\in [r]}(d_{p_i^{e_i}}(f))$ and $M=\max_{i\in [r]}(p_i^{e_i})$, then almost always
$$
d_{max}>\frac{\left(\lg{n}-1\right)^{1/r}}{M}.
$$
\end{corollary}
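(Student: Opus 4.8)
The plan is to derive this corollary directly from Theorem~\ref{thm_nonsymexact} by a crude term-by-term estimate, so no new probabilistic argument is required and the ``almost always'' qualifier is simply inherited. First I would observe that, since $M=\max_{i\in[r]}(p_i^{e_i})$, every factor of $m$ obeys $p_i^{e_i}\le M$, hence $m=\prod_{i=1}^r p_i^{e_i}\le M^r$. Likewise, since $d_{max}=\max_{i\in[r]}(d_{p_i^{e_i}}(f))$, we have $\prod_{i=1}^r d_{p_i^{e_i}}(f)\le d_{max}^r$.

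Next I would multiply these two bounds to get $m\cdot d_{p_1^{e_1}}(f)\cdots d_{p_r^{e_r}}(f)\le M^r d_{max}^r=(M\,d_{max})^r$. By Theorem~\ref{thm_nonsymexact} the left-hand side exceeds $\lg n-1$ almost always, so almost always $(M\,d_{max})^r>\lg n-1$; taking $r$-th roots (legitimate since $r\ge 1$ and both sides are nonnegative) yields $M\,d_{max}>(\lg n-1)^{1/r}$, that is, $d_{max}>(\lg n-1)^{1/r}/M$, which is exactly the asserted inequality.

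There is essentially no obstacle here; the only points worth a moment's care are that all quantities involved (representing degrees and prime powers) are nonnegative so the $r$-th root is monotone, and that the event on which Theorem~\ref{thm_nonsymexact} holds is precisely the event on which this corollary holds, so the phrase ``almost always'' transfers verbatim. If one wishes, one can remark that the estimates $m\le M^r$ and $\prod_i d_{p_i^{e_i}}(f)\le d_{max}^r$ are tight exactly when the $p_i^{e_i}$ are all equal and the $d_{p_i^{e_i}}(f)$ are all equal, so the bound loses nothing in that regime.
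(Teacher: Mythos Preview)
Your derivation is correct and is exactly the intended one: the paper states this corollary without proof, leaving it as an immediate consequence of Theorem~\ref{thm_nonsymexact} via the obvious bounds $m\le M^r$ and $\prod_i d_{p_i^{e_i}}(f)\le d_{max}^r$ followed by an $r$-th root. There is nothing to add.
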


For the $\mathsf{OR}_n$ function, although in different representing models, this corollary is analogue to a result of \cite{BL14}, which states the degree lower bound of the weak representation by nonclassical polynomials is $\Omega{((\lg{n})^{1/r})}$. 
It is interesting to know why different proofs meet the same barrier of $\lg{n}$; does this relate to the entropy?

By taking $m=pq$ with distinct primes $p$ and $q$, we immediately have $d_p(f)d_q(f)=\Omega(\lg{n})$ almost always. 
\begin{corollary}
It is almost always true that $f\in \mathcal{F}_n$ has exact representing polynomial of degree $o(\sqrt{\lg{n}})$ over at most one finite field.
\end{corollary}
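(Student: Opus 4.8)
The plan is to deduce this from Theorem~\ref{thm_nonsymexact} by taking $m=pq$, after one preliminary normalization: the exact representing degree of a Boolean function over a finite field depends only on the characteristic of that field. Indeed, if $\mathbb{F}$ has characteristic $p$, then $\{0,1\}\subseteq\mathbb{F}_p\subseteq\mathbb{F}$, and the multilinear representation of $f$ over any commutative ring is unique (it is the M\"obius inversion of the truth table), so the exact representing polynomial of $f$ over $\mathbb{F}$ is just the integer polynomial $F(\bfv{x})$ read modulo $p$, and its degree equals $d_p(f)$. Hence ``$f$ has an exact polynomial of degree $\le d$ over a finite field of characteristic $p$'' is precisely the statement $d_p(f)\le d$, and it suffices to argue about characteristics.

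Next I would fix two distinct primes $p\neq q$ and apply Theorem~\ref{thm_nonsymexact} with $r=2$, $e_1=e_2=1$, $m=pq$: for all but an $o(1)$ fraction of $f\in\mathcal{F}_n$ we get $pq\cdot d_p(f)\,d_q(f)>\lg n-1$, that is, $d_p(f)\,d_q(f)>(\lg n-1)/(pq)=\Omega(\lg n)$ since $p,q$ are constants. (The negligibly small fraction of symmetric $f$ causes no difficulty, since for symmetric $f\in\mathcal{F}_n^*$ the symmetric theorem gives the even stronger $d_p(f)d_q(f)=\Omega(n)$.) Consequently, for any $g(n)=o(\sqrt{\lg n})$ the event ``$d_p(f)\le g(n)$ and $d_q(f)\le g(n)$'' would force $d_p(f)\,d_q(f)\le g(n)^2=o(\lg n)$, contradicting the displayed bound; so almost always $f$ has no exact representation of degree $o(\sqrt{\lg n})$ over fields of both characteristic $p$ and characteristic $q$.

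To upgrade this to all pairs simultaneously, I would restrict to $f$ depending on all $n$ variables (all but a $2^{-\Omega(2^n)}$ fraction of $f$) and observe that only finitely many primes are even candidates for $d_p(f)<n$: a prime with $d_p(f)\le k<d_{\mathbb{Z}}(f)$ must divide one of the top nonvanishing integer M\"obius coefficients of $f$, each of absolute value at most $2^n$, so there are at most $2^{O(n)}$ candidate primes; since for a uniformly random $f$ and any fixed prime $p$ the failure probability $\Pr[d_p(f)\le g(n)]$ is at most $2^{-2^{\Omega(n)}}$ uniformly in $p$, a union bound over the candidates remains $o(1)$. Combining the pieces, almost always at most one characteristic --- hence at most one finite field --- admits an exact representing polynomial of $f$ of degree $o(\sqrt{\lg n})$.

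I expect the last step to be the only delicate point: reconciling the asymptotic phrase ``$o(\sqrt{\lg n})$'' with a finite-$n$ ``almost always'' claim and controlling all candidate primes at once. If one instead adopts the (arguably intended) reading in which two candidate primes are fixed in advance, the first two paragraphs already constitute a complete proof, immediate from Theorem~\ref{thm_nonsymexact}.
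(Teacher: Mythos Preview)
Your proposal is correct and follows essentially the same route as the paper: the paper derives the corollary in one line from Theorem~\ref{thm_nonsymexact} by taking $m=pq$ for two distinct primes, obtaining $d_p(f)\,d_q(f)=\Omega(\lg n)$ almost always, and then reading off the $o(\sqrt{\lg n})$ conclusion. Your first two paragraphs are exactly this argument (with the extra observation that the degree over a finite field depends only on the characteristic); the paper does not attempt any union bound over all characteristics, so your third paragraph goes beyond what the paper itself proves.
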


Note the lower bound in \cite{GLS10} means $f$ has degree $o(\lg{n})$ in at most one finite field. Although our result is weaker than \cite{GLS10}, our proof and the bound are both much simpler.

\section{Conclusions}
We prove any $n$-variate Boolean has no-junta subfunctions. It shows a possible way to deal with general Boolean functions via its subfunctions of under some restrictions. Besides,
for symmetric boolean functions, we proved $pqd_p(f)d_q(f)>n$, where $p$ and $q$ are distinct primes. This means any no-junta symmetric $f$ can have degree $o(\sqrt{n})$ in at most one finite field. In the nonsymmetric case, we prove that for a random function $f:\{0,1\}^n\to \{0,1\}$ almost always $pqd_p(f)d_q(f)>\lg(n)-1$, which means almost always $f$ can have degree $o(\sqrt{\lg{n}})$ in at most one finite field.


\end{document}